\theoremstyle{definition}
\newtheorem{theorem}{Theorem}
\newtheorem{lemma}[theorem]{Lemma}
\newtheorem{corollary}[theorem]{Corollary}
\DeclareMathOperator\COR{COR}
\DeclareMathOperator\SIM{SIM}
\DeclareMathOperator\USD{USD}
\DeclareMathOperator\tr{tr}
\DeclarePairedDelimiter\abs\lvert\rvert
\newcommand\proj[1]{\ket{#1}\!\bra{#1}}
\definecolor{darkred}{rgb}{.8 .1 .1}
\definecolor{darkgreen}{rgb}{.1 .8 .1}
\definecolor{darkyellow}{rgb}{.6 .6 .0}
\begin{document}

\title{Minimal scheme for certifying three-outcome qubit measurements in the prepare-and-measure scenario}

\author{Jonathan Steinberg}
\email{steinberg@physik.uni-siegen.de}
\affiliation{Naturwissenschaftlich–Technische Fakultät, Universität Siegen, 57068 Siegen, Germany}
\author{H. Chau Nguyen}
\affiliation{Naturwissenschaftlich–Technische Fakultät, Universität Siegen, 57068 Siegen, Germany}
\author{Matthias Kleinmann}
\affiliation{Naturwissenschaftlich–Technische Fakultät, Universität Siegen, 57068 Siegen, Germany}
\affiliation{Faculty of Physics, University of Duisburg--Essen, 47048 Duisburg, Germany}

\begin{abstract}
The number of outcomes is a defining property of a quantum measurement, in particular, if the measurement cannot be decomposed into simpler measurements with fewer outcomes. Importantly, the number of outcomes of a quantum measurement can be irreducibly higher than the dimension of the system. The certification of this property is possible in a semi-device-independent way either based on a Bell-like scenario or by utilizing the simpler prepare-and-measure scenario. Here we show that in the latter scenario the minimal scheme for certifying an irreducible three-outcome qubit measurement requires three state preparations and only two measurements and we provide experimentally feasible examples for this minimal certification scheme. We also discuss the dimension assumption characteristic to the semi-device-independent approach and to which extend it can be mitigated.
\end{abstract}

\maketitle

\section{Introduction}
The most general description of a quantum measurement is given by a positive operator valued measure (POVM) which is any collection of positive operators summing up to identity. The set of measurements described in this way contains instances which are neither projective nor can they be obtained by combining projective measurements. This class of genuinely nonprojective measurements can be utilized, for example, in quantum computing \cite{A21,A22}, quantum cryptography \cite{A25,A26}, randomness certification \cite{A24}, and quantum tomography \cite{A23}. But since genuinely nonprojective measurements cannot be combined from projective measurements, their experimental implementation is difficult and typically requires control over additional degrees of freedoms \cite{Peres:1993}. It is hence of interest to verify whether an experiment had successfully implemented a nonprojective measurement. Recently semi-device-independent certification schemes have come into the focus of theoretical investigation \cite{Mironowicz2019SW,Tavakoli2020bSW, Miklin2020SW} as well as experimental implementation \cite{A12, Tavakoli2020aSW,Smania2020}. The employed certification schemes can be divided into two classes, those based on Bell-like scenarios \cite{A12} and those using a prepare-and-measure scenario \cite{Mironowicz2019SW, Tavakoli2020bSW, Miklin2020SW, Tavakoli2020aSW}

In the former case, an entangled state is distributed to two spatially separated measurement stations and the correlations between the different measurements at each station can then certify the presence of a genuinely nonprojective measurement. In the latter case, the certification consists of several preparation procedures, possibly intermediate transformations, and subsequent measurements on the same system. Both scenarios are device-independent because only very rudimentary assumptions need to be made about the implementation details of the state preparation and measurement devices. However, because nonprojective measurements can always be implemented on a system with enlarged Hilbert space, it is common to both scenarios that they add an extra assumption, namely an upper limit on the dimension of the system, rendering the scenarios semi-device-independent.

It must be noted that the certification of a genuinely nonprojective measurements is usually performed in a slightly different context where the number of outcomes of a measurement plays a key role \cite{Mironowicz2019SW,Tavakoli2020bSW, Miklin2020SW, A12, Tavakoli2020aSW}. If a measurement cannot be implemented by using measurements with a lower number of outcomes, then the number of outcomes is irreducible. Since projective measurements cannot have more outcomes than the dimension of the system, it is hence sufficient to certify excess outcomes in order to certify that a measurement is also genuinely nonprojective. For the purposes of this paper, where we consider three-outcome measurements on a qubit, this distinction is not relevant, because for qubits, the number of outcomes of a measurement is irreducibly three if and only if the measurement is genuinely nonprojective.

In the prepare-and-measure scenario, so far, the certification schemes for three-outcome and four-outcome qubit measurements are based on linear inequalities satisfied for all correlations requiring less outcomes. The schemes use at least two additional measurement settings \cite{Tavakoli2020aSW} to achieve a certification. In this paper we do not restrict ourselves to linear inequalities and we show that the minimal scheme consists of three different state preparations and only one auxiliary measurement. Our analysis is complemented with examples which allow a simple certification of a three-outcome qubits measurement.

Our paper is organized as follows. In Section~\ref{PrepMeasure} we revisit the concept of prepare-and-measure scenarios and fix our notation and terminology. In Section~\ref{Certification} we define the operational setup in which a three-outcome qubit measurement can be certified. We prove the necessity of one auxiliary measurement and three preparation procedures and give corresponding examples. We proceed in Section~\ref{Spekkens} by discussing a possible mitigation of the assumption of an upper bound on the dimension before we conclude with a discussion in Section~\ref{conclusion}.

\section{Prepare-and-measure scenario} \label{PrepMeasure}
A prepare-and-measure scenario can be understood as a setup that is composed solely of a preparation device and a measurement device. An experimenter can choose among $s$ different preparation procedures labeled by $x \in \set{1,\dotsc,s}$ and $m$ different measurements labeled by $y \in \set{1,\dotsc,m}$. After choosing one particular pair $(x,y)$ the experimenter produces the state $x$ on which the measurement $y$ is performed. Consequently this experiment can be described by the experimentally accessible correlations $p(a | x,y)$ which give the probability of obtaining outcome $a$ when performing measurement $y$ on preparation $x$. It is important to note that the preparation as well as the measurement apparatuses are considered as a black box, that is, no assumptions are made about the state $\rho_x$ and the measurement description $\mathsf M_y$. This is with the exception that we assume that the dimension of the underlying Hilbert space is fixed. In this sense, properties of the system that can be deduced from the experimental data alone are semi-device-independent.

\subsection{Structure of the measurements}
Any $n$-outcome measurement $\mathsf M$ on a $d$-dimensional Hilbert space can be written as a POVM, that is, as positive semidefinite operators $\mathsf M= (M_{1},\dotsc,M_{n})$ satisfying $\sum_a M_{a} = \openone$. The operators $M_{a}$ are the effects of $\mathsf M$ and are associated with the outcomes $a \in \set{1,\dotsc,n}$ of $\mathsf M$. The set of all POVMs is convex, that is, it is closed with respect to taking probabilistic mixtures and efficient algorithms are known in order to decompose a given POVM into extremal POVMs \cite{A31}. By virtue of the Born rule, the probability $p(a)$ of obtaining outcome $a$ for a quantum state $\rho$ is given by $p(a) = \tr (\rho M_{a})$.

It is interesting to notice that a similar notion of POVMs can also be introduced in classical probability theory. Here, a general $n$-outcome measurement on a $d$-dimensional classical system is given by $n$ vectors from the $d$-dimensional unit cube $C_{d}=\set{\boldsymbol x\in \mathbb R^d| 0\le x_i\le 1}$ such that they sum up to $\boldsymbol{1}= (1,\dotsc,1) \in \mathbb{R}^{d}$. Therefore the set of all classical $n$-outcome measurements is equivalent to the set of all right-stochastic $d\times n$ matrices. We mention that this classical case is identical to the quantum case when one restricts all effects and states to be diagonal in some fixed basis.

When a measurement has only two nonzero outcomes then the measurement is dichotomic, for three nonzero outcomes it is trichotomic, and it is $n$-chotomic in the case of $n$ nonzero outcomes. An $n$-outcome POVM $\mathsf M$ can be simulated \cite{A15} with $n'$-chotomic POVMs $( \mathsf N_\ell )_{\ell}$ if there exists a probability distribution $(p_{\ell})_{\ell}$ such that $\mathsf M = \sum_{\ell} p_{\ell} \mathsf N_{\ell}$. Otherwise the measurement is irreducibly $n$-chotomic. For $n=3$ and $n'=2$ the simulation reduces to the randomization of three dichotomic measurements, that is,
\begin{equation}\label{eq:sim}\begin{split}
(M_{1},M_{2},M_{3}) = p_{1} (N_{1|1}, N_{2|1},0) +p_{2} (0,N_{2|2}, N_{3|2})\\
 +p_{3} ( N_{1|3},0, N_{3|3}),
\end{split}\end{equation}
where we wrote $N_{a|\ell}$ for the outcome $a$ of the measurement $\mathsf N_\ell$. These reducible three-outcome measurements form a convex subset of the set of all measurements. While in the $d$-dimensional classical probability theory, all measurements are reducible to $d$-outcome measurements, this is not the case in quantum theory \cite{Busch1995}. An archetypal counterexample is the trine POVM $\mathsf S$ which is composed out of three qubit effects $S_a=\frac23 \proj{S_a}$ where $\ket{S_a}$ for $a=1,2,3$ are located in a plane of the Bloch sphere and are rotated by an angle of $\frac23\pi$ against each other, see Figure~\ref{fig:structure1}.

\begin{figure}
\centering
\includegraphics[width=.8\linewidth]{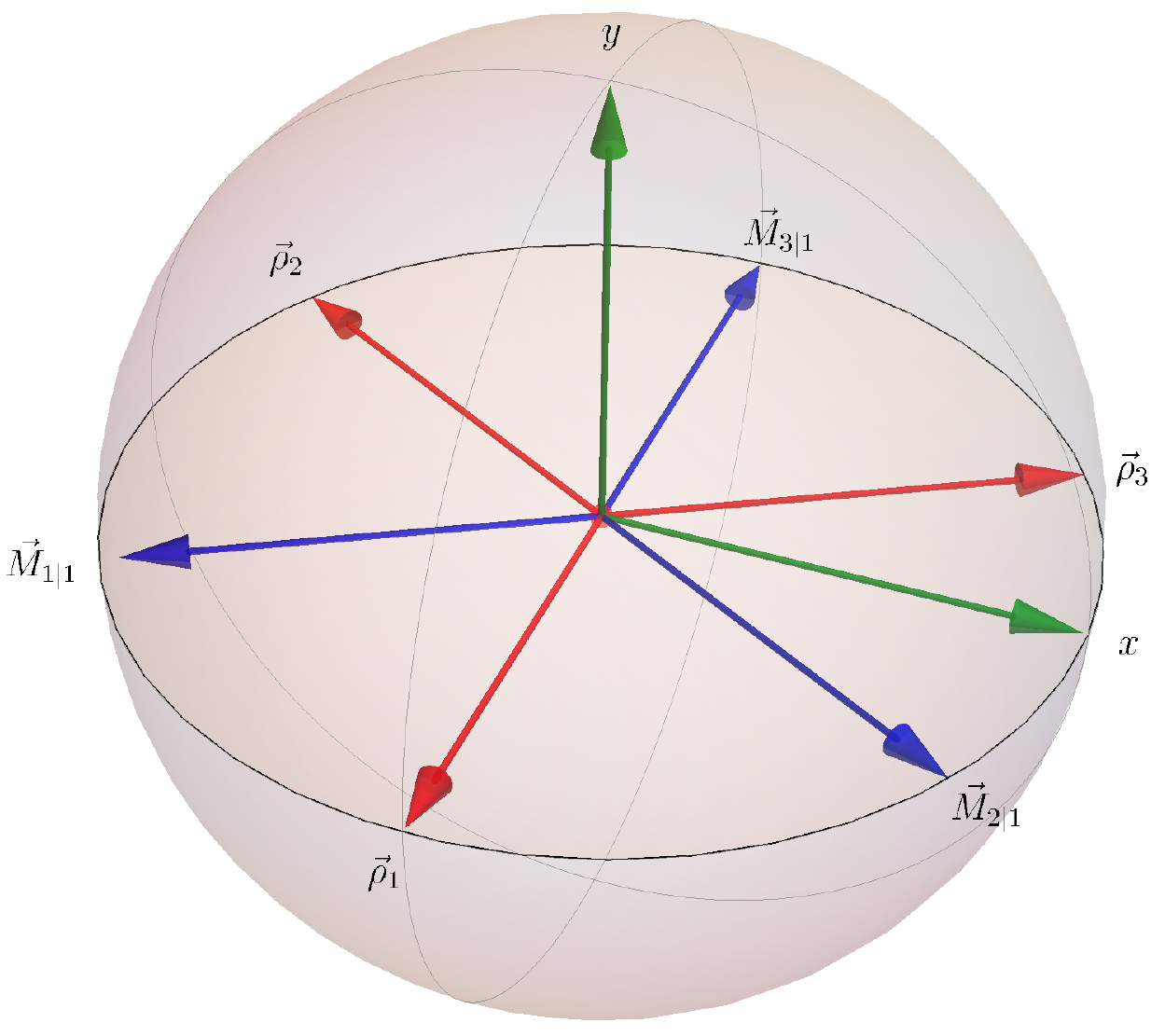}
\caption{\label{fig:structure1}%
Representation of the state- and effect configuration for generating trichotomic correlations with the trine POVM. The states $\rho_{x}$ are obtained from the vectors labeled by $\vec\rho_{x}$ via  $\rho_x=\frac12(\openone+\sum_k [\vec\rho_{x}]_k \sigma_k)$ for $x \in  \lbrace 1,2,3 \rbrace$, and the effects $M_{a|1}$ are obtained from the vectors labeled by $\vec M_{a|1}$ via $M_{a|1}=\frac{1}{3}(\openone+ \sum_k [\vec M_{a|1}]_k \sigma_k)$ for $a \in \lbrace 1,2,3 \rbrace$.}
\end{figure}

\subsection{Unambiguous state discrimination} \label{subsec:USD}
Unambiguous state discrimination \cite{A3} (USD) is a special instance of quantum state estimation. While the power of quantum state tomography relies on the access to a sufficiently high number of identically prepared quantum states, here only a single copy of the input state is available. In what follows we formulate the task of USD for the special case where the system subject to discrimination is prepared in one of two pure states. Then, one party (Alice) randomly but with equal probability chooses one of two states $\ket{\psi_{1}}$ and $\ket{\psi_{2}}$ which are known to both parties and sends it to a receiver (Bob). By measuring the incoming state, Bob must either correctly identify the state or declare that he does not know the answer, yielding an inconclusive result. Naturally, as soon as Alice's two states are not perfectly distinguishable, $\braket{\psi_1|\psi_2}\ne 0$, Bob cannot archive a unit success rate. It is well known \cite{A3} that for qubits Bob's best measurement is given by the irreducibly trichotomic POVM $\mathsf M= ( M_{1}, M_{2}, M_{3} )$ with
\begin{align} \label{usdpovm}
 M_{1} = \frac{\openone- \proj{\psi_{2}}}
         {1 + \abs{\braket{\psi_{1}|\psi_{2}}}}, \quad
 M_{2} = \frac{\openone- \proj{\psi_{1}}}
         {1 + \abs{\braket{\psi_{1}|\psi_{2}}}}
\end{align}
and $M_{3} = \openone- M_{1} -M_{2}$. With this construction one can see directly that if the measurement yields the outcome $a=1$ or $a=2$, then one can conclude that the received state was $\ket{\psi_{a}}$ while outcome $a=3$ does not allow a definite statement.

Motivated by the above considerations, we define now a family of correlations that is particularly useful for our later analysis. We consider a prepare-and-measure scenario with three preparations, $\rho_1$, $\rho_2$, and $\rho_3$, and two measurements, $\mathsf M_1 = ( M_{1|1}, M_{2|1}, M_{3|1} )$ and $\mathsf M_2 = ( M_{1|2}, M_{2|2} )$. The states and the measurements are chosen in such a way, that if $\mathsf M_2$ yields the outcome $1$ this implies that the received state was not $\rho_{1}$ and if the obtained state is $\rho_{2}$, then $\mathsf M_2$ produces with certainty outcome $2$. For $\mathsf M_1$ we impose that from outcome $1$ ($2$) it follows that the state was not $\rho_{3}$ ($\rho_{2}$). Since any POVM has to obey the normalization condition, the last effect can always be calculated from the previous ones. Hence we arrange the correlations $p(a|x,y)= \tr(\rho_{x} M_{a | y})$ in a $3\times 3$ matrix $\mathcal P$, where the rows correspond to the states and the columns to the effects $M_{1|1}, M_{2|1}, M_{1|2}$, that is,
\begin{align} \label{Eq:USDDistr}
\mathcal{P} =
\begin{pmatrix}
  p(1 | 1,1) & p(2 | 1,1) & 0 \\
  p(1 | 2,1) & 0 & 1 \\
  0              & p(2 | 3,1) & p(1 | 3,2)
\end{pmatrix}.
\end{align}
Since correlations of this form are motivated by USD, we refer to them as USD correlations. In particular, if the dimension of the system is known to be $d$, we write $\USD_d$ for the set of all USD correlation achievable under this constraint. Obviously the sets $\USD_{d}$ obey the inclusion $\USD_{d} \subset \USD_{d+1}$ and furthermore, as we point out in Section~\ref{minimal}, dimension 3 is already sufficient to achieve all USD correlations, $\USD_d=\USD_3$ for all $d>3$. We are therefore particularly interested in the qubit case, for which we have the following characterization.
\begin{lemma} \label{lem:para}
The set $\USD_2$ consists exactly of all correlations of the form
\begin{align} \label{eq:para}
 \mathcal{P} ( p,q,\xi) =
 \begin{pmatrix}
 p \xi    & q         & 0\\
 p(1-\xi) & 0         & 1\\
 0        & q (1-\xi) & \xi
 \end{pmatrix},
\end{align}
with $p,q,\xi \in [0,1]$ such that $(1-p)(1-q)\ge pq\xi$. In addition, for a given correlations matrix $\mathcal P$, the states and measurements realizing $\mathcal P$ are unique, up to a global unitary transformation.
\end{lemma}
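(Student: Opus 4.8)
The plan is to prove the two inclusions of Eq.~\eqref{eq:para} separately and to harvest the uniqueness statement from the first. Everything hinges on a rigidity principle special to dimension two: for $0\le M\le\openone$ and a state $\rho$, if $\tr(\rho M)=1$ then $M$ has eigenvalue $1$ with $\rho$ supported in that eigenspace, and if $\tr(\rho M)=0$ then the support of $\rho$ lies in $\ker M$; hence on a qubit a nonzero effect $M$ with $\tr(\rho M)=0$ must be rank one with $\rho$ the pure state it annihilates, while $\tr(\rho M)=1$ with $M\neq\openone$ forces $M=\proj{\psi}=\rho$ for a common unit vector $\ket{\psi}$. I would extract all consequences of the four prescribed entries of~\eqref{Eq:USDDistr} (three zeros and one one) together with normalization.

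\emph{Necessity.} Let qubit states $\rho_1,\rho_2,\rho_3$ and measurements $\mathsf M_1=(M_{1|1},M_{2|1},M_{3|1})$, $\mathsf M_2=(M_{1|2},M_{2|2})$ realize a matrix of the form~\eqref{Eq:USDDistr}. Applying the rigidity principle to $p(1\mid2,2)=\tr(\rho_2 M_{1|2})=1$ — with $M_{1|2}\neq\openone$ since $p(1\mid1,2)=0$ makes $M_{2|2}\neq0$ — gives $M_{1|2}=\proj{\psi_2}$ and $\rho_2=\proj{\psi_2}$; applying it to $p(2\mid1,2)=\tr(\rho_1 M_{2|2})=1$ (as $p(1\mid1,2)=0$), with $M_{2|2}=\openone-\proj{\psi_2}\neq\openone$, gives $M_{2|2}=\proj{\psi_1}$ and $\rho_1=\proj{\psi_1}$, and $\proj{\psi_1}+\proj{\psi_2}=\openone$ forces $\braket{\psi_1|\psi_2}=0$. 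Next, $p(2\mid2,1)=\tr(\rho_2 M_{2|1})=0$ yields $M_{2|1}=q\proj{\psi_1}$ with $q\in[0,1]$. Since $\rho_1+\rho_2=\openone$, one has $\tr M_{1|1}=p(1\mid1,1)+p(1\mid2,1)=:p\in[0,1]$; if $p>0$ then $M_{1|1}\neq0$, and $p(1\mid3,1)=\tr(\rho_3 M_{1|1})=0$ forces $\rho_3=\proj{\psi_3}$ pure with $\ket{\psi_3}$ spanning $\ker M_{1|1}$, so $M_{1|1}=p\proj{\psi_3^\perp}$ and $M_{3|1}=\openone-M_{1|1}-M_{2|1}$. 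Fixing a global unitary with $\ket{\psi_1}\mapsto\ket{0}$, $\ket{\psi_2}\mapsto\ket{1}$ and the residual diagonal phase so that $\ket{\psi_3}=\sqrt{1-\xi}\,\ket{0}+\sqrt{\xi}\,\ket{1}$ with $\xi=p(1\mid3,2)$, the evaluation $p(a\mid x,y)=\tr(\rho_x M_{a|y})$ reproduces the matrix $\mathcal P(p,q,\xi)$ of~\eqref{eq:para} exactly, and positivity of $M_{3|1}$ amounts, via $\det M_{3|1}=(1-p)(1-q)-pq\xi$, to precisely $(1-p)(1-q)\ge pq\xi$; the remaining state and POVM conditions are automatic. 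If instead $p=0$, then $M_{1|1}=0$ and the matrix equals $\mathcal P(0,q,\xi)$, still of the required form.

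\emph{Sufficiency.} For the converse I would run this construction backwards: given $p,q,\xi\in[0,1]$ with $(1-p)(1-q)\ge pq\xi$, the operators $\rho_1=\proj{0}$, $\rho_2=\proj{1}$, $\rho_3=\proj{\psi_3}$, $\mathsf M_2=(\proj{1},\proj{0})$, $M_{2|1}=q\proj{0}$, $M_{1|1}=p\proj{\psi_3^\perp}$, $M_{3|1}=\openone-M_{1|1}-M_{2|1}$ are legitimate states and POVMs — the inequality is exactly $M_{3|1}\ge0$, and $0\le M_{1|1}+M_{2|1}\le\openone$ is immediate — and a short computation shows they realize $\mathcal P(p,q,\xi)$, so $\mathcal P(p,q,\xi)\in\USD_2$.

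\emph{Uniqueness and the main difficulty.} Uniqueness up to a global unitary drops out of the necessity argument: $\rho_1,\rho_2,M_{1|2},M_{2|2},M_{2|1}$ were forced by the data, and for $p\neq0$ so were $\rho_3,M_{1|1},M_{3|1}$, the only freedom being the orthonormal basis $\{\ket{\psi_1},\ket{\psi_2}\}$ and the relative phase of $\ket{\psi_3}$, both implemented by a global unitary. (When $p=0$ the state $\rho_3$ is pinned only through $\braket{\psi_2|\rho_3|\psi_2}=\xi$ and is genuinely underdetermined, so the uniqueness claim is to be read for $M_{1|1}\neq0$.) I expect the main difficulty to lie not in any single computation — the determinant identity is the only real one and is short — but in keeping the chain of rigidity implications airtight strictly in dimension two and in disposing of the boundary cases $\xi\in\{0,1\}$, $p=0$, $q=0$ and $M_{1|2}=\openone$ without spoiling either the parametrization or the uniqueness statement; it is exactly this rigidity, which is lost in higher dimension, that makes $\USD_2$ so small.
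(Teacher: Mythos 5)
Your route is essentially the paper's: extract the rigidity consequences of the four fixed entries of the USD pattern, parameterize the remaining freedom, reduce positivity of $M_{3|1}$ to a determinant condition, construct the converse explicitly, and read off uniqueness. However, the "rigidity principle" you invoke at the very first step is false as stated: $\tr(\rho M)=1$ with $M\neq\openone$ does \emph{not} force $M=\proj{\psi}$; for instance $M=\proj{0}+\tfrac12\proj{1}$ and $\rho=\proj{0}$ satisfy both hypotheses while $M$ is not a projector. Hence your inference ``$\tr(\rho_2 M_{1|2})=1$ together with $M_{1|2}\neq\openone$ gives $M_{1|2}=\proj{\psi_2}$'' does not follow from what you cite. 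The conclusion is true, but you must use the zero entry $\tr(\rho_1 M_{1|2})=0$ in full strength rather than only to exclude $M_{1|2}=\openone$: since $\rho_1$ is a state its support is nonzero and lies in $\ker M_{1|2}$, so on a qubit $M_{1|2}$ has rank at most one; combined with the eigenvalue-$1$ eigenvector forced by $\tr(\rho_2 M_{1|2})=1$, this pins $M_{1|2}=\proj{\psi_2}=\rho_2$ and then $\rho_1=\proj{\psi_2^{\perp}}$. This joint use of $\mathcal P_{2,3}=1$ and $\mathcal P_{1,3}=0$ is exactly how the paper argues, and with this one-line repair the rest of your chain is sound. A smaller point: positivity of $M_{3|1}$ requires $\tr M_{3|1}\ge 0$ in addition to $\det M_{3|1}\ge 0$; it is automatic from $p,q\le 1$, but should be said (the paper states both conditions).

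Apart from that, your derivation of $M_{2|1}=q\proj{\psi_2^{\perp}}$, of $M_{1|1}=p\proj{\psi_3^{\perp}}$ and $\rho_3=\proj{\psi_3}$ when $p>0$, the identity $\det M_{3|1}=(1-p)(1-q)-pq\xi$, the converse construction, and the uniqueness-up-to-unitary argument all coincide with the paper's proof. Your caveat that uniqueness degenerates when $M_{1|1}=0$ (so that $\rho_3$ is underdetermined and need not be pure) is a legitimate refinement that the paper's proof passes over silently.
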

\noindent
The proof is given in the Appendix~\ref{proof:lemma2}.

\section{Certification of trichotomic measurements} \label{Certification}
We approach now the question whether the difference between reducible and irreducible $n$-outcome measurements can be observed using only the correlations, that is, whether there exist an irreducible $n$-outcome POVM $\mathsf M=(M_1,\dotsc,M_n)$ together with auxiliary measurements $\mathsf M_2, \dotsc \mathsf M_m$ and states $\rho_1,\dotsc,\rho_s$ such the correlations $p(a|x,y)= \tr (\rho_x M_{a|y})$ cannot stem from a reducible measurement. Correlations of this type are genuinely $n$-chotomic, otherwise simulable $n$-chotomic. Clearly if such correlations exist, they enable us to certify that the measurement $\mathsf M_1$ is indeed irreducible. To point out the difference between both questions consider the following example. Suppose that one implements the trine POVM $\mathsf S$ on a qubit system, which is an irreducible three-outcome measurement. We show in Section~\ref{minimal} that this measurement alone can never yield correlations which cannot be explained by a reducible measurement. Therefore an irreducible three-outcome measurement does not necessarily define genuine trichotomic correlations while genuine trichotomic correlations always involve an irreducible measurement.

\subsection{Minimal scenario} \label{minimal}
Suppose we want to certify that a given measurement apparatus implements an irreducible $n$-outcome POVM $\mathsf M_1$. What is the minimal scenario, in which one can conclude from the output statistics alone that the POVM is irreducible? In other words, what is the minimal number of state preparations $s$ and auxiliary measurements $m-1$?

Clearly, if one has access to $s$ preparations and $m-1$ auxiliary measurements, the set of all possible correlations $p(a|x,y)$ that can be obtained in a scenario without any constraint on the dimension of the system, yields a convex set. As it turns out, this convex set is a polytope, whose extremal points correspond to deterministic correlations \cite{Hoffmann:2018NJP}, that is, where all $p(a|x,y)$ are either 0 or 1. If the dimension $d$ of the system is at least $s$ then these extremal points can be obtained from a fixed choice of $s$ orthogonal states $\rho_x=\proj{\psi_x}$ and at most $s$-chotomic measurements with effects $\mathsf M_{a|y}=\sum_x p(a|x,y)\rho_x$. Hence, all correlations can be written as convex combination of deterministic strategies. Since all extreme points use the same states, the convex coefficients can be absorbed into the effects yielding at most $s$-chotomic POVMs. For our case of an irreducible three-outcome measurement on a qubit, $n=3$ and $d=2$, this implies that at least $s=3$ different states are required. It also follows that $\USD_d\subset \USD_3$ since only three different states are used in the USD correlations.

Regarding the number of auxiliary measurements, $m-1$, we consider the case where no auxiliary measurements are used. In this scenario, we write $\mathcal C_d(s,n)$ for the convex hull of all correlations on a $d$-dimensional classical system and $\mathcal Q_d(s,n)$ for the convex hull of all correlations on a $d$-dimensional quantum system. Note that these sets are only of importance for determining the minimal scenario and are not used subsequently.
In Theorem~3 in Ref.~\cite{A7} it was established that both sets are equal, $\mathcal Q_d(s,n)=\mathcal C_d(s,n)$. Since all correlations in $\mathcal{C}_{d}(s,n)$ can be obtained from $d$-chotomic measurements, this property also follows for all correlations in $\mathcal{Q}_{d}(s,n)$. Therefore the smallest scenario which allows us to certify an irreducible three-outcome measurement on a qubit includes at least two measurements, $\mathsf M_{1} = (M_{1 | 1}, M_{2 | 1}, M_{3 | 1})$ and the auxiliary measurement $\mathsf M_{2} = (M_{1|2}, M_{2 | 2})$, see also Figure~\ref{fig:1}.

\begin{figure}
\centering
\includegraphics[width=0.9\linewidth]{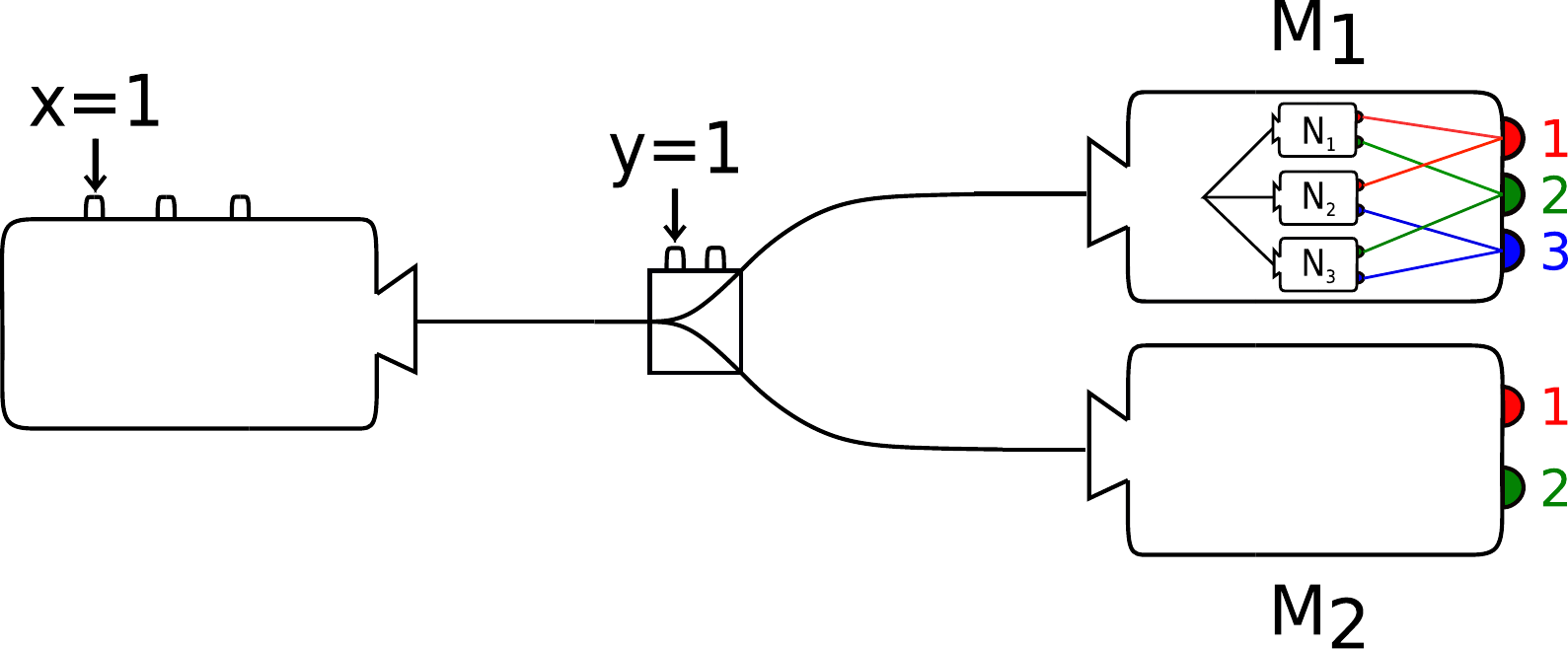}
\caption{\label{fig:1}%
Prepare-and-measure setup of the minimal scenario producing simulable trichotomic correlations. On the left hand side one can choose between three different preparation procedures $x \in \lbrace 1,2,3 \rbrace$ and the corresponding qubit state is sent to one of two measurement devices (right hand side). The measurement device is chosen by the experimenter and this choice is denoted by $y \in \lbrace 1,2 \rbrace$. For $y =1$ the measurement can yield three different outcomes, but governed by a specific inner mechanism. It consists of three two-outcome measurements, one of which is chosen at random. Depending on the outcome of this measurement, an outcome is assigned to the overall three-outcome measurement. If $y=2$ the measurement device is a simple two-outcome measurement.}
\end{figure}
The set of all correlations achievable in the minimal scenario with fixed dimension $d$ is subsequently denoted by $\COR_d$. Furthermore we write $\SIM_{d}$ for the set of simulable trichotomic correlations within $\COR_d$. For a concise description of all sets that are relevant for our discussion, see also Table \ref{tab:summary}. Clearly, the correlations $\USD_d$ are a subset of $\COR_d$ and $\SIM_{d} \cap \USD_{d} $ are the simulable trichotomic correlations within $\USD_d$.
\begin{table}[h]
\centering
\begin{tabular}{|l|c|}
\hline
Symbol & Meaning\\
\hline
$\USD_{d}$ & \makecell{ Set of correlations of the form (\ref{Eq:USDDistr}) that can \\ be obtained from a $d$-level quantum system. }\\
\hline
$\SIM_{d}$ &   \makecell{Set of simulable trichotomic correlations\\ that can be obtained from a $d$-level\\ quantum system in the minimal scenario.}\\
\hline
$\COR_{d}$ &  \makecell{Set of all correlations achievable\\ in the minimal scenario with fixed dimension $d$.}\\
\hline
\end{tabular}
\caption{Explanation of the meaning of the introduced symbols that are important for the subsequent paragraphs.}
\label{tab:summary}
\end{table}

\subsection{Geometry of the trichotomic correlations}
In this section we investigate how the sets $\SIM_{2}$ and $\COR_{2}$ are related. In contrast to Bell-like scenarios, where the convex hull of the correlations is taken \cite{A6,A12}, here we restrict to the bare sets $\COR_2$ and $\SIM_2$. These sets are not convex as can be seen by considering the subset $\USD_2$, characterized by Lemma~\ref{lem:para}. In fact, it is evident that the correlation matrix $D_0=\mathcal{P}(1,0,1)$ and $D_1=\mathcal{P}(1,1,0)$ can be realized with dichotomic measurements, that is, $D_0,D_1\in \USD_2\cap\SIM_2$. However, no convex combination $D_\lambda= \lambda D_{0} + (1- \lambda)D_{1}$ with $0<\lambda<1$ can be written in the form $\mathcal{P}(p,q,\xi)$ as given by Eq.~\eqref{eq:para}. Here it is important to note, that the correlations $D_{\lambda}$ are still USD correlations. Hence $D_{\lambda} \in \COR_{2}$ already implies $D_{\lambda} \in \USD_{2}$. The relation of the points $D_{1}, D_{2}$ and their convex mixture $D_{\lambda}$ to the sets $\USD_{2}$ and $\SIM_{2}$ is illustrated in Figure \ref{fig:illustration}. Since $\USD_2$ is an affine section of $\COR_2$, it follows that neither $\COR_2$ nor $\SIM_2$ are convex.

Our next step is to establish that not all qubit USD correlations are simulable trichotomic. We have the following theorem which we prove in Appendix~\ref{proof:thm2}.
\begin{theorem} \label{maintheorem}
There exist correlations in $\text{USD}_{2}$ that are not contained in the convex hull of $\text{USD}_{2} \cap \SIM_{2}$. 
\end{theorem}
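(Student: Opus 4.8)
The plan is to reduce everything to the three–parameter family $\mathcal{P}(p,q,\xi)$ of Lemma~\ref{lem:para} and then to exhibit an explicit element of $\USD_2$ together with a linear functional separating it from $\mathrm{conv}(\USD_2\cap\SIM_2)$. First I would use the uniqueness part of Lemma~\ref{lem:para}: the zero/one pattern of $\mathcal{P}$ forces $\rho_1\perp\rho_2$ with both pure, forces $\mathsf M_2=(\proj{\psi_2},\proj{\psi_1})$ to be the projective measurement in the $\{\psi_1,\psi_2\}$ basis, and forces the first two effects of $\mathsf M_1$ to be multiples of $\proj{\psi_3^{\perp}}$ and of $\proj{\psi_1}$; matching the remaining entries then pins down the unique qubit realisation as $\rho_3=\proj{\psi_3}$ with $\abs{\braket{\psi_1|\psi_3}}^2=1-\xi$ and
\begin{equation*}
\mathsf M_1=\bigl(p\proj{\psi_3^{\perp}},\ q\proj{\psi_1},\ \openone-p\proj{\psi_3^{\perp}}-q\proj{\psi_1}\bigr),\qquad \ket{\psi_3^{\perp}}\perp\ket{\psi_3}.
\end{equation*}
Since no other qubit realisation exists, $\mathcal{P}(p,q,\xi)\in\SIM_2$ if and only if this $\mathsf M_1$ can be written in the simulation form~\eqref{eq:sim}.

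The core step is to decide when this $\mathsf M_1$ is reducible, and here I would exploit that two of its effects have rank one. In any decomposition~\eqref{eq:sim} the contributions to $M_{1|1}=p\proj{\psi_3^{\perp}}$ must themselves be multiples of $\proj{\psi_3^{\perp}}$, say $p_1 N_{1|1}=t\proj{\psi_3^{\perp}}$ and $p_3 N_{1|3}=(p-t)\proj{\psi_3^{\perp}}$ with $0\le t\le p$. Substituting this, the normalisation conditions express all remaining effects explicitly and their positivity becomes a few operator inequalities between $2\times 2$ Hermitian matrices; in particular $p_2 N_{2|2}=t\proj{\psi_3^{\perp}}+q\proj{\psi_1}-p_1\openone\ge 0$ forces $p_1$ to be at most the smallest eigenvalue of $t\proj{\psi_3^{\perp}}+q\proj{\psi_1}$, while $N_{1|1}\le\openone$ forces $p_1\ge t$. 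Since the eigenvalues of $t\proj{\psi_3^{\perp}}+q\proj{\psi_1}$ are $\tfrac12\bigl[(t+q)\pm(t^2+q^2-2tq(1-2\xi))^{1/2}\bigr]$, these two requirements are compatible only if $tq\xi=0$, and a short inspection of the cases $\xi=0$ (the POVM becomes classical), $q=0$ (it becomes dichotomic), and $t=0$ (the leftover constraints collapse to $p+q\le 1$) yields
\begin{equation*}
\mathcal{P}(p,q,\xi)\in\SIM_2\iff \xi=0\ \text{ or }\ p+q\le 1 .
\end{equation*}
I expect this to be the main obstacle: organising all the positivity constraints and the optimisation over $p_1,p_2,p_3$ so that no feasible decomposition is overlooked. (A reassuring check: positivity of the third effect of $\mathsf M_1$ is exactly the inequality $(1-p)(1-q)\ge pq\xi$ defining $\USD_2$.)

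Finally I would describe $\mathrm{conv}(\USD_2\cap\SIM_2)$. The assignment $(p,q,\xi)\mapsto\mathcal{P}(p,q,\xi)$ is affine in $\xi$ at fixed $(p,q)$, namely $\mathcal{P}(p,q,\xi)=\xi\,\mathcal{P}(p,q,1)+(1-\xi)\,\mathcal{P}(p,q,0)$, and affine in $(p,q)$ on each slice $\xi=0$ and $\xi=1$; tracking the images of the extreme points of the two domains $\{\xi=0\}$ and $\{p+q\le1\}$ then shows that $\USD_2\cap\SIM_2$ is contained in the convex hull of the seven matrices $\mathcal{P}(0,0,0)$, $\mathcal{P}(1,0,0)$, $\mathcal{P}(0,1,0)$, $\mathcal{P}(0,0,1)$, $\mathcal{P}(1,0,1)$, $\mathcal{P}(0,1,1)$, $\mathcal{P}(1,1,0)$, all of which lie in $\USD_2\cap\SIM_2$, so this polytope equals $\mathrm{conv}(\USD_2\cap\SIM_2)$. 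The linear functional $L(\mathcal{P})=p(1|1,1)+p(2|1,1)-p(2|3,1)-p(1|3,2)$ is nonpositive on each of the seven vertices, hence on the whole polytope, whereas $L(\mathcal{P}(p,q,\xi))=\xi(p+q-1)$. Therefore any $\mathcal{P}(p,q,\xi)\in\USD_2$ with $\xi>0$ and $p+q>1$ — for instance $p=q=\tfrac23$ and $\xi=\tfrac15$, for which $(1-p)(1-q)=\tfrac19>\tfrac{4}{45}=pq\xi$ — lies outside $\mathrm{conv}(\USD_2\cap\SIM_2)$, which establishes the theorem.
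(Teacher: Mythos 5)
Your proposal is correct and follows essentially the same route as the paper: you pin down the realisation via Lemma~\ref{lem:para}, analyse the simulation decomposition of the rank-one effects to characterise $\USD_2\cap\SIM_2$ (your condition ``$\xi=0$ or $p+q\le1$'' is exactly the paper's parameterisation in Eq.~\eqref{eq:Dmat} with $p=f_3(1-\kappa)$, $q=f_2\kappa$), and then separate with a linear functional, your $L$ being precisely $-W$ from the paper's proof. The only cosmetic differences are that you bound $\mathrm{conv}(\USD_2\cap\SIM_2)$ by an explicit seven-vertex polytope instead of evaluating the witness directly on the parameterised family, and you use the separating point $(p,q,\xi)=(\tfrac23,\tfrac23,\tfrac15)$ rather than the paper's one-parameter family $p=\tfrac12$, $q=1/(\xi+1)$.
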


\noindent
Hence, even the convex hull of the simulable trichotomic qubit USD correlations does not cover all qubit USD correlations. This might rise the expectation that there exists a linear inequality separating $\COR_2$ and $\SIM_2$, despite of the nonconvexity properties discussed above. But note, that the statement of Theorem~\ref{maintheorem} only concerns the subset of USD correlations and one cannot conclude that the convex hull of $\SIM_2$ is a proper subset of $\COR_2$.

\subsection{Genuine trichotomic correlations}\label{ssec:genuine}
For an experimental certification of an irreducible three-outcome measurement it is essential to find correlations $\mathcal P\in\COR_2$ such that closest simulable correlations $\mathcal P'\in \SIM_2$ have a distance $r$ of reasonable size. We measure this distance either in terms of the supremum norm, yielding $r_\infty$, or in terms of the Euclidean norm, yielding $r_2$, that is,
\begin{align}
\label{eq:uniform}
r_\infty&= \max_{i,j} \, \abs{\mathcal P_{i,j}-\mathcal P'_{i,j}},\\
\label{eq:euclid}
r_2&= \big[\sum_{i,j}(\mathcal P_{i,j}-\mathcal P'_{i,j})^{2} \big]^{\frac12}.
\end{align}
According to Theorem~\ref{maintheorem} we can preliminarily focus on the family of USD correlations, since Theorem~\ref{maintheorem} guarantees that there exist USD correlations $\mathcal P\in \USD_2$ such that $r>0$.

In order to compute $r_2$ and $r_\infty$ we rely on numerical optimization over the set $\SIM_2$. The optimization is nonlinear and involves three (possibly mixed) states as well as four dichotomic POVMs. It is important to note that if the states or the effects are fixed, the problem can be rephrased as a semidefinite program (SDP) and becomes thereby easy to solve numerically. We use this fact to implement an alternating optimization (``seesaw'' algorithm \cite{Pal2010}), a detailed description is provided in Appendix~\ref{optimizationUp}. We mention that, technically, this optimization algorithm only yields guaranteed upper bounds on the distances, because it is based on finding the correlations in $\SIM_2$ closest to $\mathcal P$. While this fact is not of practical concern, we discuss in Appendix~\ref{optimizationLow} also how strict, but rather rough, lower bounds can be obtained. The largest distance we found is realized by the choice of $\mathcal{P} = \mathcal{P} (0.577, 0.726,0.276)$, where the seesaw algorithm yields $r_{2} \approx 0.0391$ and $r_{\infty} \approx 0.0177$. Using the algorithm described in Appendix \ref{optimizationLow}, the computation of the lower bound yields $r_{\infty} \approx 0.0022$, hence one order of magnitude smaller. However, the lower bound should only emphasize that the distance is indeed positive, hence consistent with Theorem \ref{maintheorem}. Larger distances can be achieved using correlations which are not confined to $\USD_2$. In particular, choosing an arrangement involving the trine POVM, see Figure~\ref{fig:structure1} and Appendix~\ref{app:optimal}, we find $r_{2} \approx 0.0686$ and $r_{\infty} \approx 0.0342$. Hence we obtain only moderate experimental requirements for a certification of an irreducible three-outcome measurement.

\begin{figure}
\includegraphics[width=.7\linewidth]{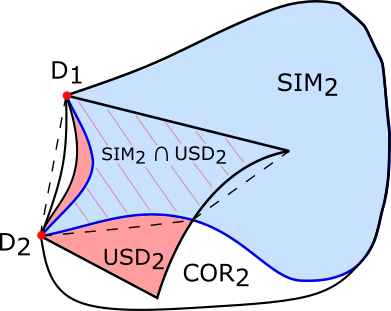}
\caption{ \label{fig:illustration}
Schematic illustration of the relations among the sets $\USD_{2}$, $\SIM_{2}$ and $\COR_{2}$. The set $\COR_{2}$ of all correlations that can be obtained with a qubit in the minimal scenario is obviously a superset of $\SIM_{2}$, the set of all simulable correlations in the minimal scenario as well as $\USD_{2}$, the set of all correlations that can be obtained with a qubit and are of the form (\ref{Eq:USDDistr}). The points $D_{1}$ and $D_{2}$ represent deterministic correlations contained in $\USD_{2} \cap \SIM_{2}$, hence located at the boundary of any of the three sets. The dashed line represents the convex hull of the set $\USD_{2} \cap \SIM_{2}$. }
\end{figure}

\section{Estimating the state space dimension} \label{Spekkens}
As discussed in the introduction, in order to certify an irreducible $n$-outcome measurement, knowledge of the dimension of the prepared system is necessary. While it is possible that the dimension can be convincingly deduced from the experimental setup, for a fully device-independent procedure, the dimension of the system has to be determined from correlation data alone. However, the dimension of a system is a physically ill-defined object, in the sense that any description of a system can always be embedded into higher dimensional system. For example, we can treat a qubit as a restricted theory of a qutrit. But from an operational point of view, one can still assess the dimension by determining the effective dimension, that is, the minimal dimension which explains the experimental data. A dimension witness \cite{A2} might seem to be the appropriate tool for this purposes, since it gives a procedure to determine a lower bound on the dimension. However, this is not sufficient for our purposes, because it does not exclude that the effective dimension can be higher than the dimension witnessed.

In general we assume that the procedure to determine the effective dimension of a system consists of $s$ different state preparations and $m$ measurements. The correlations $p(a|x,y)$ form a matrix $(\mathcal A_{x,\ell})_{x,\ell}$, where $x$ labels the states, $a$ labels the outcome of measurement $y$, and $\ell$ enumerates all outcomes of all measurements. For a $d$-level system, the rank of this matrix can be at most the affine dimension of the state space, that is, $d^2-1$. Hence determining the rank of the matrix $\mathcal A$ can give an estimate of the effective dimension of the system. In practice, one would choose a large number of preparation procedures and a large number of measurement procedures with the expectation, that an estimate of the rank of $\mathcal A$ produces a reliable estimate of the affine dimension of the state space. We mention that for consistently reasons, the preparation- and measurement procedures should include those required to certify the irreducibility of the $n$-outcome measurement.

While this approach can work in principle, it has to be considered with care. For an implementation of an irreducible three-outcome measurement on a qubit, it is typically necessary to dilate the three-outcome measurement to a projective measurement on a higher-dimensional system \cite{Peres:1993}. Despite of this, it still makes sense to speak about an irreducible three-outcome measurement if the additional dimensions used for the dilated measurement are not accessible due to a physical mechanism that reduces the dimension before entering the measurement station. In a setup using the polarization degree of freedom of a photon, this may be achieved, for example, by means of a single mode fiber. Mathematically, such a mechanism correspond to a completely positive map $\Phi$ so that the correlations are obtained through $p(a|x,y)=\tr[\Phi(\rho_x) M_{a|y}]$. However, then the rank of the matrix $\mathcal A$ alone is insufficient to establish the effective dimension of the system, as can be seen by considering the dephasing qutrit-qutrit channel $\Phi\colon \rho \mapsto \sum_k \proj k \rho \proj k$. Using this channel, the matrix $\mathcal A$ will have rank three, which would suggest an effective dimension of $d=2$, while the actual effective dimension is $d=3$. This can be overcome by certifying that the shape of the state- and effect space correspond to a qubit. For methods to implement such a certification we refer here to Ref.~\cite{A1}.

\section{Discussion} \label{conclusion}
We studied the structure of the correlations produced by irreducible three-outcome qubit measurements in the prepare-and-measure scenario. Our goal was a minimal scenario in terms of the number of experimental devices required. Using only one auxiliary measurement we found that the genuine trichotomic correlations and the simulable trichotomic correlations can be separated in the Euclidean norm by $r_2=0.0686$. Consequently, all 9 entries in the correlation matrix must be determined with an absolute error of roughly $r_2/\sqrt 9\approx 0.0229$. We mention the great similarity of this setup with the one used in Ref.~\cite{Tavakoli2020aSW}, namely the states and the measurements are the same, but in our setup we omit one of the auxiliary measurements. While this scheme is motivated by symmetry considerations, we also used the USD correlations to systematically describe a subset of the trichotomic correlations. But within these correlations the largest distance we obtained is smaller, $r_2\approx 0.0347$. We mention that the results in Ref.~\cite{Tavakoli2020bSW} are also based on USD, but with the different goal to certify any of several trichotomic measurement.

Our results are not based on a linear inequality separating genuine and simulable trichotomic correlations and such an inequality is not necessary for the purpose of an experimental certification. However, we also established in Theorem~\ref{maintheorem} that such an inequality exist when the analysis is constrained to the USD correlations. It is now an interesting open question whether in our minimal scenario this also holds when considering all correlations, since this would imply that also in the minimal scenario the convex hull of the simulable correlations does not include all trichotomic correlations. A further interesting question is if it is possible to find a systemic approach to construct families of distributions in a minimal scenario that are able to decide whether an implemented $n$-chotomic measurement is irreducible. Such an approach would unify recent results, that actual appear in separated contexts. 

\begin{acknowledgements}
We thank Rene Schwonnek, Xiao-Dong Yu and Eric Chitambar for discussions and the University of Siegen for enabling our computations through the HoRUS  and the OMNI cluster.
This work was supported by
the Deutsche Forschungsgemeinschaft (DFG, German Research Foundation, project numbers 447948357 and 440958198),
the Sino-German Center for Research Promotion (Project M-0294), and
the ERC (Consolidator Grant 683107/TempoQ). JS acknowledges support from the House of Young Talents of the University of Siegen.
\end{acknowledgements}

\appendix

\section{Proof of Lemma~\ref{lem:para}}\label{proof:lemma2}
Suppose that $\mathcal P \in \USD_{2} $. This has consequences for the measurements $\mathsf{M}_{1} = (M_{1 | 1}, M_{2 | 1}, M_{3 | 1})$, $\mathsf{M} _{2} = (M_{1 | 2}, M_{2 | 2})$ and the states $\rho_{1},\rho_2, \rho_{3}$ that can realize the correlations. In particular, $\mathcal P_{2,3}=\tr [\rho_{2} M_{1 | 2 } ] =1$ and $\mathcal P_{1,3}=0$ imply $\rho_{2} = M_{1 | 2} = | \varphi \rangle \langle \varphi \vert$ and $\rho_{1} = | \varphi^{\perp} \rangle \langle \varphi^{\perp} \vert$, where $\ket\varphi$ and $\ket{\varphi^\perp}$ are two orthonormal vectors. With a similar argument we obtain $M_{2 | 1} = q | \varphi^{\perp} \rangle \langle \varphi^{\perp} \vert$ with $0\le q\le 1$. It remains to consider the consequences of $\mathcal P_{3,1}=0$ for $\rho_{3}$ and $M_{1 | 1}$. This requires $M_{1 | 1} = p | \eta^{\perp} \rangle \langle \eta^{\perp}\vert$ and $\rho_{3} = | \eta\rangle \langle \eta \vert$ for some orthonormal vectors $\ket\eta$ and $\ket{\eta^\perp}$ and $0\le p\le 1$.

Without loss of generality, we can assume
\begin{align}
 \ket\eta&= \sqrt{\xi}\ket\varphi+\sqrt{1-\xi}e^{i\phi}\ket{\varphi^\perp}
 \text{ and}\\
 \ket{\eta^\perp}&= \sqrt{1-\xi}\ket\varphi-\xi e^{i\phi}\ket{\varphi^\perp}
\end{align}
where $0\le \xi\le 1$ and $\phi\in\mathbb R$. This yields immediately Eq.~\eqref{eq:para} together with the conditions $p,q,\xi\in[0,1]$. For $\mathsf M_1$ to form a POVM it remains to verify that $M_{3|1}=\openone-M_{1|1}-M_{2|1}$ is positive semidefinite. This reduces here to $\tr(M_{3|1})\ge0$ and $\det(M_{3|1})\ge0$ and can be equivalently expressed as the single condition $(1-p)(1-q)\ge pq\xi$.

From the above construction it is also immediately clear that conversely, any choice of $p,q,\xi$ satisfying the constraints in Lemma~\ref{lem:para} is in $\USD_2$. Finally, given $\mathcal P$, all effects and states are fixed by the above considerations, except for the choice of the orthonormal basis $\{\ket\varphi,e^{i\phi}\ket{\varphi^\perp}\}$, proving the claim of a unique representation up to a unitary transformation.


\section{Proof of Theorem~\ref{maintheorem}} \label{proof:thm2}

We first parameterize the correlations $\mathcal D\in \SIM_{2} \cap \USD_{2}$ with $\mathcal D_{3,3}\ne 0$. By virtue of Eq.~\eqref{eq:sim} and using the proof of Lemma~\ref{lem:para} above, the effects of the simulated trichotomic POVM $\mathsf M_1$ can be written as
\begin{align}
\label{SIP:eq1}
M_{1|1}&=p\proj{\eta^\perp}= \kappa_1 F_1+\kappa_3(\openone-F_3) \text{ and}\\
\label{SIP:eq2}
M_{2|1}&=q\proj{\varphi^\perp}= \kappa_1 (\openone-F_1) +\kappa_2 F_2,
\end{align}
where, compared to Eq.~\eqref{eq:sim}, we wrote $\kappa_j$ in place of $p_j$ and $F_j$ in place of $N_{j|j}$. For $\kappa_1\ne 0$ it follows that $F_1\propto \proj{\eta^\perp}$ and $\openone-F_1\propto \proj{\varphi^\perp}$, yielding $F_1=\proj{\eta^\perp}=\proj{\varphi}$. This is in contradiction to the assumption $\mathcal D_{3,3}\ne 0$ by virtue of $\mathcal D_{3,3}=\tr(M_{1|2}\rho_3)$ with $\rho_3=\proj \eta$ and $M_{1|2}=\proj{\varphi}$. Hence $\kappa_1=0$. Writing $\kappa\equiv \kappa_2=1-\kappa_3$, Eq.~\eqref{SIP:eq2} implies $F_2=f_2\proj{\varphi^\perp}$ with $0\le f_2\le 1$ and $\kappa f_2=q$. Similarly, from Eq.~\eqref{SIP:eq1} one obtains $\openone- F_{3} = f_{3} \proj{\eta^{\perp}}$ with $(1-\kappa)f_3=p$. Therefore, $\mathcal D\in \USD_2\cap\SIM_2$ with $\mathcal D_{3,3}\ne 0$ if and only if
\begin{align}\label{eq:Dmat}
 \mathcal D =
    \begin{pmatrix}
     f_{3} (1- \kappa) \xi &  f_{2} \kappa & 0 \\
     f_{3} (1- \kappa) (1-\xi)& 0 & 1 \\
     0 & f_{2} \kappa (1-\xi) & \xi
    \end{pmatrix}
\end{align}
with $f_2,f_3,\kappa,\xi\in [0,1]$ and $\xi\ne 0$.

Next we show that for any $0<\xi<1$ there exist correlations $\mathcal P \in \USD_{2}\setminus \SIM_2$. For this we consider the linear map
\begin{equation}
W\colon \mathcal P\mapsto -\mathcal P_{1,1}-\mathcal P_{1,2}+\mathcal P_{3,2}+\mathcal P_{3,3}.
\end{equation}
For $\mathcal P(p,q,\xi)$ as in Eq.~\eqref{eq:para} and the choice $p=\frac12$ and $q=1/(\xi+1)$, one verifies that for any $0<\xi<1$ the constraint $(1-p)(1-q)\ge pq\xi$ is satisfied and in addition $W(\mathcal P)< 0$ holds.
However, for any $\mathcal D$ as in Eq.~\eqref{eq:Dmat}, we have $W(\mathcal D)=
 \xi[1-f_2\kappa-f_3(1-\kappa)]\ge 0$ and in addition for any $\mathcal T\in \USD_2$ with $\mathcal T_{3,3}=0$ we have immediately $W(\mathcal T)=0$.
Consequently, $W(\mathcal S-\mathcal P)>0$ for our choice of $\mathcal P$ and any $\mathcal S\in \SIM_2\cap\USD_2$, that is, $\mathcal S\ne \mathcal P$.

From this observation, it readily follows that neither does the convex hull of $\USD_2\cap\SIM_2$ contain all of $\USD_2$. This is the case, because $W$ is a linear map and thus its minimum over the convex hull of $\USD_2\cap \SIM_2$ is attained already for some $\mathcal S\in \USD_2\cap \SIM_2$. But for this set we just proved that $W(\mathcal S)>W(\mathcal P)$ for certain $\mathcal P\in \USD_2$.

\section{Upper bound on the minimal distance} \label{optimizationUp}
We compute the maximal radius $r$ of a ball $B_{r}(\mathcal P)$ around correlations $\mathcal P\in \COR_2\setminus \SIM_2$ such that $B_{r}(\mathcal P) \cap \SIM_{2}$ is empty. As we mention in the main text, we consider this maximal ball with respect to the supremum norm $r=r_\infty$ and the Euclidean norm $r=r_2$, see Eq.~\eqref{eq:uniform} and Eq.~\eqref{eq:euclid}, respectively. Correspondingly, computing $r$ can be formulated as the optimization problem to minimize a real parameter $t$ over all $\mathcal Q\in \SIM_2$, such that
\begin{equation}\label{eq:consinfty}
 -t \le \mathcal P_{i,j}-\mathcal Q_{i,j} \le t\quad \forall i,j
\end{equation}
in case of $r_\infty$ and in the case of $r_2$,
\begin{equation}\label{eq:cons2}
 \sum_{i,j} (\mathcal P_{i,j}-\mathcal Q_{i,j})^2 \le t^2.
\end{equation}

We write $F_1=M_{1|1}$, $F_2=M_{2|1}$, $F_3=M_{1|2}$ so that $Q_{i,j}=\tr(\rho_i F_j)$. If we keep the effects fixed, then the optimization is a semidefinite program of the following type: Minimize $t$ under the constraints $\rho_i\ge0$ and $\tr(\rho_i)=1$ for $i=1,2,3$ and either the linear constraint \eqref{eq:consinfty} or the quadratic-convex constraint \eqref{eq:cons2}. Similarly, if we keep the states fixed, then the optimization is again a  semidefinite program, however now with the constraints on the states replaced by constraints on the effects, namely,
\begin{gather}
 F_1 = F'_1+F'_0,\quad F_2= F'_2 + f_0 \openone - F'_0,\\
 0\le F'_0\le f_0\openone, \quad
 0\le F'_1\le f_1\openone, \quad
 0\le F'_2\le f_2\openone,\\
 0\le F_3\le \openone, \quad
 f_0+f_1+f_2=1.
\end{gather}
Since these small semidefinite programs can be solved very fast numerically, this invites for a seesaw optimization \cite{Pal2010} where one alternates between the two optimizations until $t$ converges.

We implement this seesaw algorithm using the Python library PICOS with the CVXOPT back-end.
 As criterion for convergence we take $t_{n-1}-t_{n}<10^{-6}$, where $t_n$ is the value after $n$ seesaw iterations. This convergence happens after at most 300 iterations. We repeat the optimization 4500 times, each time with different starting values for $\rho_1$, $\rho_2$, and $\rho_3$, where we take pure states chosen randomly according to the Haar measure and then decrease the purity $\tr(\rho^2)$ to be uniformly in the interval $[\frac 12,1]$. The same optimal value is always reached independently of the start values for the Euclidean norm, while it occurs only for about 1\% of the start values in the case of the supremum norm.

\section{Lower bound on the minimal distance} \label{optimizationLow}
In this appendix we describe an algorithm that yields a lower bound on the distance between a given correlation and the set $\SIM_{2}$. In the following we will write $B (P,\epsilon)$ for the ball at $P \in \text{COR}_{2}$ of radius $\epsilon$ in the metric induced by the $\infty$-norm. Considering a given correlation $P \in \USD_{2}$ of the form in Eq.~\eqref{Eq:USDDistr}, we are interested in the condition on the states and the measurements under which we have $Q \in B (P,\epsilon)$ for $\epsilon >0$. Clearly any $Q \in \COR_2$ is of the form
\begin{align} \label{ap:eq1}
Q=
\begin{pmatrix}
\tr (\rho_1 M_{1|1}) & \tr(\rho_1 M_{2|1}) & \tr(\rho_1 M_{1|2}) \\
\tr (\rho_2 M_{1|1}) & \tr(\rho_2 M_{2|1}) &\tr(\rho_2 M_{1|2}) \\
\tr (\rho_3 M_{1|1}) & \tr(\rho_3 M_{2|1})&\tr(\rho_3 M_{1|2}) \\
\end{pmatrix}
\end{align}
However, it turns out to be useful to group the parameters in (\ref{ap:eq1}) into two different classes. Denote $a=(\rho_1,\rho_2,M_{1|2})$ and $b=(\rho_3,M_{1|1},M_{2|1})$, which together form the full set of parameters of the distribution $Q = Q(a,b)$.
The question, how large can $\epsilon$ be chosen, such that there is no $Q(a,b) \in B(P,\epsilon)$ if the trichotomic measurement $\mathsf{M}_{1} = (M_{1|1},M_{2|1},M_{3|1})$ is limited to be simulable by dichotomic measurements. Observe that if $Q(a,b)$ are USD correlations, then the elements $Q_{13}$, $Q_{22}$, $Q_{31}$, and $Q_{23}$ all are $0$ or $1$ and this imposes strong constraints on the parameters $a$. In fact, if $Q(a,b)=P$, then the value of $a$ is fixed up to a unitary transformation, as we have seen in Appendix~\ref{proof:thm2}. We call this fixed value $a^\ast$.
In the following, we will extend the argument in Appendix~\ref{proof:thm2} to show that $Q(a,b) \in B(P,\epsilon)$ implies that  $a \in B(a^\ast,\mathcal{O}(\sqrt{\epsilon}))$.
Then, using the continuity of $Q$, one can show that $a \in B(a^\ast,\mathcal{O}(\sqrt{\epsilon}))$ implies $Q(a,b) \in B(Q(a^\ast,b),\mathcal{O}(\sqrt{\epsilon}))$.
Since $Q(a,b) \in B(P,\epsilon)$ and $Q(a,b) \in B(Q(a^\ast,b),\mathcal{O}(\sqrt{\epsilon}))$, it follows that $Q(a^\ast,b) \in B(P,\epsilon + \mathcal{O} (\sqrt{\epsilon}))$ by the triangular inequality.
We have thus reduced the problem of asking for the existence of $a$ and $b$, such that $Q(a,b) \in B(P,\epsilon)$ to simply asking for the existence of $b$ such that $Q(a^\ast,b) \in B(P,\epsilon + \mathcal{O} (\sqrt{\epsilon}))$.
The latter means, for a fixed value of $\epsilon$, and fixed values of $\rho_1$, $\rho_2$ and $M_{1|2}$, we ask for the feasibility of $\rho_3$, $M_{1|1}$ and $M_{2|1}$ such that $Q \in B(P, \epsilon+ O(\sqrt{\epsilon}))$. 
This is not yet an SDP, it can however be decomposed into a finite number of SDP's by scanning the values of $\rho_3$ and bounding the error in the finite scanning.

The detailed calculation of the above program is straightforward, yet, cumbersome. We will first introduce some notation. For convenience we parameterize the effects and the states by the Bloch coordinates. More precisely, for an effect $E$, we write $(x_{0}, \vec{x})$ which means $E = \frac{1}{2} \sum_{i=0}^{3} x_{i} \sigma_{i}$, where $\sigma_{0} := \openone$ and $\sigma_{i}$ are the Pauli matrices. Similarly, for a state $\rho$ we write $\vec y$ with $\rho=\frac12 (\openone+\sum_{i=1}^3 y_i \sigma_i)$.
In particular, we write $\rho_i =( 1, \vec{r}_i )$, $M_{1|1} = ( x_{01}, \vec{x_1})$, $M_{2|1} = ( x_{02} \, , \vec{x_2})$, $M_{1|2} = (y_0, \vec{y})$. Because of the unitary freedom, we can always assume $\vec{x}_1$ and $\vec{x}_2$ to have no $z$-component and at the same time $\vec{r}_2$ to have no $y$-component. We also write $Q \in P \pm \epsilon$ as a synonym of $Q \in B(P,\epsilon)$.

Firstly, we show that, one can bound the trace the effect from below, given a lower bound on the probability for that effect.
\begin{lemma} \label{ap:lem1}
Let $\rho=(1, \vec{r} )$ be a state and $F=( x_0, \vec{x} )$ be an effect, we have
\begin{itemize}
\item[(i)] if $\tr [F \rho] \ge a$ then $x_0 \ge a$.
\item[(ii)] if $ \tr [F \rho] \le b$ then $x_0 \le 1+b$.
\end{itemize}
\end{lemma}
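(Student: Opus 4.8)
The plan is to work entirely in the Bloch representation fixed just above and to reduce both claims to the Cauchy--Schwarz inequality combined with the operator inequalities $0\le F\le\openone$ that characterize an effect.

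First I would express the probability in coordinates. Writing $\rho=\frac12(\openone+\vec r\cdot\vec\sigma)$ and $F=\frac12(x_0\openone+\vec x\cdot\vec\sigma)$ and using the Pauli trace identities $\tr\openone=2$, $\tr\sigma_i=0$, $\tr(\sigma_i\sigma_j)=2\delta_{ij}$, one obtains
\begin{equation}
\tr[F\rho]=\tfrac12\bigl(x_0+\vec x\cdot\vec r\bigr).
\end{equation}
Next I would record the Bloch-space constraints: a valid state obeys $\abs{\vec r}\le1$; positivity $F\ge0$ forces the smaller eigenvalue $\tfrac12(x_0-\abs{\vec x})$ of $F$ to be nonnegative, hence $\abs{\vec x}\le x_0$; and $F\le\openone$ forces the larger eigenvalue $\tfrac12(x_0+\abs{\vec x})$ to be at most $1$, hence $\abs{\vec x}\le 2-x_0$.

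For part (i) I would bound $\vec x\cdot\vec r\le\abs{\vec x}\,\abs{\vec r}\le\abs{\vec x}\le x_0$ using Cauchy--Schwarz together with the positivity constraint, so that $a\le\tr[F\rho]\le\tfrac12(x_0+x_0)=x_0$. For part (ii) I would instead invoke the upper operator bound: $\vec x\cdot\vec r\ge-\abs{\vec x}\,\abs{\vec r}\ge-\abs{\vec x}\ge-(2-x_0)=x_0-2$, so that $b\ge\tr[F\rho]\ge\tfrac12(x_0+x_0-2)=x_0-1$, i.e.\ $x_0\le1+b$.

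There is no genuine obstacle here; the only point that needs care is selecting the appropriate half of the effect condition in each case --- $F\ge0$ limits how large $\vec x\cdot\vec r$ can be, which yields the lower bound on $x_0$, whereas $F\le\openone$ limits how negative $\vec x\cdot\vec r$ can be, which yields the upper bound on $x_0$. Both estimates are tight: equality in (i) is attained for a rank-one effect with $\vec x$ parallel to a pure-state $\vec r$, and equality in (ii) for $\vec x$ antiparallel to a pure-state $\vec r$ with $\abs{\vec x}=2-x_0$.
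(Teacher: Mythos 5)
Your proof is correct and follows essentially the same route as the paper: the Bloch-coordinate expression $\tr[F\rho]=\tfrac12(x_0+\vec x\cdot\vec r)$, Cauchy--Schwarz, and the eigenvalue constraints coming from $0\le F\le\openone$. The only cosmetic difference is in (ii): the paper obtains it by applying (i) to the complementary effect $\openone-F=(2-x_0,-\vec x)$, whereas you use the equivalent bound $\abs{\vec x}\le 2-x_0$ directly; these are the same estimate, trivially rearranged.
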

\begin{proof}
We first show (i). One has $\tr [F \rho] = \frac{1}{2} (x_0 + \vec{x} \vec{r}) \ge a$. Because $\vec{x} \vec{r} \le \abs{\vec{x}} \abs{\vec{r}} \le x_0 \abs{\vec{r}} \le x_0$, we have $x_0 + x_0 \ge 2 a$, or $x_0 \ge a$. For (ii), one has similarly, $\tr [F \rho] = \frac{1}{2} (x_0 + \vec{x} \vec{r}) \le b$, then $\frac{1}{2} (2-x_0 - \vec{x} \vec{r}) \ge 1-b$. Then applying (i), we find $2-x_0 \ge 1-b$, or $x_0 \le 1+b$.
\end{proof}

The following corollaries is an application of the lemma to the condition $Q \in P \pm \epsilon$.
\begin{corollary}[Estimation of the traces of effects]
\label{ap:lcor1}
For $Q \in P \pm \epsilon$ one needs
\begin{itemize}
\item[(i)] $1 - \epsilon \le y_0 \le 1+ \epsilon $
\item[(ii)] $c_1 - \epsilon \le x_{01} \le 1 + \epsilon$ with $c_1= \max\{P_{12},P_{22}\}$.
\item[(iii)] $c_2 - \epsilon \le x_{02} \le 1 + \epsilon$ with $c_2= \max\{P_{13},P_{33}\}$.
\end{itemize}
\end{corollary}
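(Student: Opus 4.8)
The plan is to read the corollary off directly from Lemma~\ref{ap:lem1}, applied entry by entry to the matrix $Q$ of Eq.~\eqref{ap:eq1}. The hypothesis $Q\in P\pm\epsilon$ is, in coordinates, nothing but the scalar inequalities $\abs{Q_{i,j}-P_{i,j}}\le\epsilon$, and each entry $Q_{i,j}$ is a probability $\tr(\rho_i F)$ for the effect $F\in\{M_{1|1},M_{2|1},M_{1|2}\}$ occupying that column.

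For the lower bounds on $x_{01},x_{02},y_0$ I would, for the column of $P$ belonging to a given effect, use that in every row $i$ one has $Q_{i,j}\ge P_{i,j}-\epsilon$; part~(i) of Lemma~\ref{ap:lem1} (with $\rho=\rho_i$, $F$ the effect, $a=P_{i,j}-\epsilon$) then forces the trace component of the effect to be at least $P_{i,j}-\epsilon$, and picking the best such row yields the bound with the constant equal to the largest entry of that column of $P$. For the effect $M_{1|2}$ this column maximum is the forced value $1$ of the template Eq.~\eqref{Eq:USDDistr}, giving $y_0\ge 1-\epsilon$; for $M_{1|1}$ and $M_{2|1}$ the relevant column has a zero in one of its rows, so the maximum reduces to a maximum over the two remaining entries, which is $c_1$, respectively $c_2$.

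For the upper bounds I would exploit the same zero entries: by the shape of Eq.~\eqref{Eq:USDDistr}, the column of $M_{1|2}$ vanishes in the first row, that of $M_{1|1}$ in the third, and that of $M_{2|1}$ in the second. At such a position $Q_{i,j}\le\epsilon$, so part~(ii) of Lemma~\ref{ap:lem1} gives that the corresponding trace component is at most $1+\epsilon$. Combining the two directions for each of the three effects gives precisely (i)--(iii).

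No step here is genuinely hard; the corollary is bookkeeping. The only point that needs attention is matching each of $x_{01},x_{02},y_0$ to the correct column of $P$ and verifying that this column indeed contains a zero entry, since it is exactly that zero which upgrades the generic estimate $x_0\le 2$ (all that Lemma~\ref{ap:lem1}(ii) yields from a large entry) to the useful $x_0\le 1+\epsilon$.
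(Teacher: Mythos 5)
Your proof is correct and is essentially the paper's own argument: apply Lemma~\ref{ap:lem1} entrywise, using the nonzero entries of each column of the USD template for the lower bounds and the forced zeros $P_{13}=P_{22}=P_{31}=0$ for the $1+\epsilon$ upper bounds — the latter you spell out more carefully than the paper, whose one-line proof of (ii) and (iii) cites only the nonzero entries even though the zero entries are what yield $x_{01},x_{02}\le 1+\epsilon$. One remark: the column maxima your argument produces are $\max\{P_{11},P_{21}\}$ for $x_{01}$ and $\max\{P_{12},P_{32}\}$ for $x_{02}$, which matches the paper's own proof (and is what is actually needed later) but not the printed definitions $c_1=\max\{P_{12},P_{22}\}$, $c_2=\max\{P_{13},P_{33}\}$; this is an apparent column-index slip in the corollary statement that you silently, and correctly, repaired.
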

\begin{proof}
(i) is the direct consequence of $0 \le Q_{13} \le \epsilon$, and $1 \ge Q_{23} \ge 1 - \epsilon$. (ii) is the direct consequence of $Q_{11} \in P_{11} \pm \epsilon$ and $Q_{21} \in P_{21} \pm \epsilon$. Likewise, (iii) is the direct consequence of $Q_{12} \in P_{12} \pm \epsilon$ and $Q_{32} \in P_{32} \pm \epsilon$
\end{proof}

\begin{corollary}[Estimation of $\vec{r}_1$ and $\vec{r}_2$ by $\vec{y}$]
\label{ap:lcor2}
{\ \\ }
\begin{itemize}
\item[(i)] $Q_{23} \ge 1 -\epsilon$ implies $\abs{\vec{y}-\vec{r}_2} \le \sqrt{4 \epsilon + \epsilon^2}$.
\item[(ii)] $Q_{13} \le \epsilon$ implies $\abs{\vec{y}+\vec{r}_1} \le \sqrt{4 \epsilon + \epsilon^2}$.
\end{itemize}
\end{corollary}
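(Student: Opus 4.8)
\emph{Proof plan.} The plan is to work entirely in the Bloch picture and reduce both claims to one elementary estimate. Recall that $\rho_i=(1,\vec r_i)$ and $M_{1|2}=(y_0,\vec y)$ give the identity $Q_{i,3}=\tr(\rho_i M_{1|2})=\tfrac12\bigl(y_0+\vec y\cdot\vec r_i\bigr)$, that $M_{1|2}$ being an effect is equivalent to the two constraints $\abs{\vec y}\le y_0$ (positivity) and $\abs{\vec y}\le 2-y_0$ (the condition $M_{1|2}\le\openone$), and that $\abs{\vec r_i}\le 1$ for any state. I would also invoke Corollary~\ref{ap:lcor1}(i), which already supplies $\abs{y_0-1}\le\epsilon$.

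For part (i), I would start from $Q_{23}\ge 1-\epsilon$; the identity above then gives $\vec y\cdot\vec r_2\ge 2-2\epsilon-y_0$. Expanding $\abs{\vec y-\vec r_2}^2=\abs{\vec y}^2-2\,\vec y\cdot\vec r_2+\abs{\vec r_2}^2$ and inserting this bound together with $\abs{\vec r_2}^2\le 1$ leaves $\abs{\vec y-\vec r_2}^2\le \abs{\vec y}^2+2y_0-3+4\epsilon$. The one step that matters is to bound $\abs{\vec y}^2$ by the \emph{sub-normalization} constraint, $\abs{\vec y}^2\le(2-y_0)^2$: with this choice the right-hand side collapses to the perfect square $(y_0-1)^2+4\epsilon$, and $\abs{y_0-1}\le\epsilon$ finishes it, giving $\abs{\vec y-\vec r_2}^2\le 4\epsilon+\epsilon^2$.

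Part (ii) is the mirror image. From $Q_{13}\le\epsilon$ the identity gives $\vec y\cdot\vec r_1\le 2\epsilon-y_0$; expanding $\abs{\vec y+\vec r_1}^2=\abs{\vec y}^2+2\,\vec y\cdot\vec r_1+\abs{\vec r_1}^2$ and using $\abs{\vec r_1}^2\le 1$ gives $\abs{\vec y+\vec r_1}^2\le \abs{\vec y}^2-2y_0+1+4\epsilon$; now the tight constraint is \emph{positivity}, $\abs{\vec y}^2\le y_0^2$, which again collapses the bound to $(y_0-1)^2+4\epsilon\le 4\epsilon+\epsilon^2$. Taking square roots in both cases yields the assertion. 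Equivalently, (ii) follows from (i) applied to the state $\rho_1$ and the complementary effect $\openone-M_{1|2}=(2-y_0,-\vec y)$, since $\tr[\rho_1(\openone-M_{1|2})]=1-Q_{13}\ge 1-\epsilon$.

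I do not expect a real obstacle here. The only place one can go wrong is the choice between the two effect constraints $\abs{\vec y}\le y_0$ and $\abs{\vec y}\le 2-y_0$: using the ``wrong'' one in a given part still produces a valid inequality, but with constant $8\epsilon$ instead of $4\epsilon$, so the tightness of the stated bound rests on making this choice correctly. Everything else is routine elementary algebra on the Bloch ball.
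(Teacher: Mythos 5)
Your proof is correct and takes essentially the paper's route: expand $\abs{\vec y\mp\vec r_i}^2$ in Bloch coordinates, bound $\abs{\vec r_i}\le 1$ and $\abs{\vec y}$ by the effect constraints, and use $\abs{y_0-1}\le\epsilon$; the paper does the same but keeps $\abs{\vec y}\le\min\{y_0,2-y_0\}$ and maximizes the resulting expression at $y_0=1+\epsilon$ instead of selecting the relevant constraint in each part. Your explicit appeal to Corollary~\ref{ap:lcor1}(i) is exactly the step the paper uses implicitly when it maximizes at $y_0=1+\epsilon$, so the two arguments coincide up to presentation.
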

\begin{proof}
(i) One has $Q_{23}= \frac{1}{2} (y_0 + \vec{y} \vec{r_2}) \ge 1 - \epsilon$, then $\vec{y} \vec{r_2} \ge 2(1-\epsilon)- y_0$.
Further, $\vec{y} \vec{r}_2 \ge 2(1-\epsilon)- y_0$ leads to $\vec{y}^2 + \vec{r}_2^2 - (\vec{y} - \vec{r}_2)^2 \ge 4 (1 -\epsilon) - 2 y_0$. Because $\abs{\vec{y}} \le \min\{y_0,2-y_0\}$, $\vec{r}_2^2 \le 1$, one obtains $(\min\{y_0,2-y_0\})^2 + 2 y_0 -3 + 4 \epsilon \ge (\vec{y} - \vec{r}_2)^2$. The left hand side is maximized at $y_0=1+\epsilon$, which leads to $4 \epsilon + \epsilon^2 \ge (\vec{y} - \vec{r}_2)^2$. The proof of (ii) is analogue to (i).
\end{proof}
Now we consider the constraint of the form $\tr [\rho F] \le \epsilon$. We show that if the trace of $F$ is bounded from below, we can bound the purity of $\rho$. This also extends to the consideration of each Bloch components of the Bloch vectors.
\begin{lemma}\label{ap:lem2}
Let $\rho=( 1,\vec{r} )$ be a state and $F=( x_0, \vec{x})$ be an effect with $x_0 \ge c$ and $\vec{x}$ having no $z$-component. Then $\tr[\rho F] \le \epsilon$ implies $\abs{\vec{r}} \ge \abs{\vec{r}_{xy}} \ge 1 - \frac{2 \epsilon}{c}$,
and $\abs{r_z} \le \sqrt{1-(1-2\epsilon/c)^2}$. As a consequence, if one lets $\vec{n}$ to be the unit vector of the direction of $\vec{r}_{xy}$, then $\abs{\vec{r}_{xy} -\vec{n}} \le 2 \epsilon/c$ and $\abs{\vec{r} - \vec{n}} \le 2 \epsilon/c + \sqrt{1-(1-2\epsilon/c)^2}$. Here $\vec{r}_{xy}$ denotes the projection of $\vec{r}$ onto the $xy$-plane and $\vec{r}_{z}$ onto the $z$-axis.
\end{lemma}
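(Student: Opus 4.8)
The plan is to translate the trace condition directly into Bloch-vector language and then exploit the fact that the effect has no $z$-component. First I would write $\tr[\rho F]=\tfrac12(x_0+\vec x\cdot\vec r)$, exactly as in the proof of Lemma~\ref{ap:lem1}. Since $\vec x$ has vanishing $z$-component, the dot product only sees the projection of $\vec r$ onto the $xy$-plane, so $\vec x\cdot\vec r=\vec x\cdot\vec r_{xy}$. This is the key observation: it is what allows one to bound $\abs{\vec r_{xy}}$ specifically, rather than merely $\abs{\vec r}$.

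Next I would invoke positivity of the effect. From $F\ge 0$ one has $x_0\ge\abs{\vec x}$, and combining this with Cauchy--Schwarz gives $\vec x\cdot\vec r_{xy}\ge-\abs{\vec x}\,\abs{\vec r_{xy}}\ge-x_0\abs{\vec r_{xy}}$. Hence $\tr[\rho F]\ge\tfrac{x_0}{2}(1-\abs{\vec r_{xy}})\ge\tfrac{c}{2}(1-\abs{\vec r_{xy}})$, where the last step uses $x_0\ge c$ and $\abs{\vec r_{xy}}\le 1$. Together with the hypothesis $\tr[\rho F]\le\epsilon$ this yields $1-\abs{\vec r_{xy}}\le 2\epsilon/c$, i.e.\ $\abs{\vec r_{xy}}\ge 1-2\epsilon/c$. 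The inequality $\abs{\vec r}\ge\abs{\vec r_{xy}}$ is immediate from $\abs{\vec r}^2=\abs{\vec r_{xy}}^2+r_z^2$, and that same Pythagorean identity, together with $\abs{\vec r}\le 1$ (validity of the state), gives $r_z^2\le 1-\abs{\vec r_{xy}}^2\le 1-(1-2\epsilon/c)^2$, which is the claimed bound on $\abs{r_z}$ (this last squaring step is the only place that uses $\epsilon\le c/2$, the only regime in which the statement has content).

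Finally, for the two consequences I would write $\vec r_{xy}=\abs{\vec r_{xy}}\vec n$, so that $\abs{\vec r_{xy}-\vec n}=1-\abs{\vec r_{xy}}\le 2\epsilon/c$, and then apply the triangle inequality $\abs{\vec r-\vec n}\le\abs{\vec r-\vec r_{xy}}+\abs{\vec r_{xy}-\vec n}=\abs{r_z}+\abs{\vec r_{xy}-\vec n}$, with the two summands already bounded. I do not expect a genuine obstacle here: the argument is a short chain consisting of the Born-rule expansion in Bloch coordinates, Cauchy--Schwarz, the effect-positivity constraint, and the Pythagorean identity. The only points demanding a little care are selecting the correct positivity bound — $\abs{\vec x}\le x_0$ rather than $\abs{\vec x}\le 2-x_0$ — and noting that for $\epsilon>c/2$ the asserted bounds become vacuous so that the statement is trivially true there.
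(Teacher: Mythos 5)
Your argument is correct and follows essentially the same route as the paper's proof: the Bloch-vector expansion of $\tr[\rho F]$, the observation that the missing $z$-component reduces $\vec x\cdot\vec r$ to $\vec x\cdot\vec r_{xy}$, the effect-positivity bound $\abs{\vec x}\le x_0$ combined with Cauchy--Schwarz, the Pythagorean identity for $\abs{r_z}$, and the triangle inequality for the consequences (which the paper dismisses as obvious). One minor caveat: your parenthetical claim that the statement is trivially true for $\epsilon>c/2$ is not accurate --- the $\abs{r_z}$ bound can actually fail there (e.g.\ $F=\tfrac{c}{2}\openone$, $\rho$ pure along $z$, $\epsilon=c$) --- but this regime lies outside the lemma's intended use, and the paper's own proof makes the same implicit smallness assumption when squaring $\abs{\vec r_{xy}}\ge 1-2\epsilon/c$.
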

\begin{proof}
Observe that $\tr[\rho F]= \frac{1}{2} (x_0 + \vec{r} \vec{x}) = \frac{1}{2} (x_0 + \vec{r}_{xy} \vec{x}) \le \epsilon$. This leads to $-\vec{r}_{xy} \vec{x} \ge x_0 - 2 \epsilon$. Then we have $\abs{\vec{r}_{xy}} x_0 \ge x_0 - 2 \epsilon$ and so $\abs{\vec{r}} \ge \abs{\vec{r}_{xy}} \ge 1 - 2 \epsilon/x_0 \ge 1 - 2 \epsilon/c$. Also $1 \ge \vec{r}_z^2 + \vec{r}_{xy}^2 \ge \vec{r}_{z}^2 + (1 - 2 \epsilon/c)^2$, so $\abs{\vec{r}_z} \le \sqrt{1 - (1 - 2 \epsilon/c)^2}$.
The latter part of the statement is obvious.
\end{proof}

\begin{corollary}[Estimation of $\vec{r}_2$ and $\vec{r}_3$ by the unit vectors of their projection onto the $xy$-plane]
Let $\vec{r}_{2xy}$ denote the $xy$-component and $\vec{r}_{2z}$ denote the $z$-component of $\vec{r}_{2}$. Further let $\vec{n}$ be the unit vector in direction of $\vec{r}_{2xy}$ and $\vec{t}$ be the unit vector in direction of $\vec{r}_{3xy}$. Then we obtain as a direct consequence of Lemma~\ref{ap:lem2}
\begin{itemize}
\item[(i)] $Q_{31} \le \epsilon$ implies $\abs{\vec{r}_{3xy} - \vec{t}} \le 2 \epsilon/(c_1 - \epsilon) = \epsilon_{3xy}$ and $\abs{\vec{r}_{3z}} \le \sqrt{1-[1-2\epsilon/(c_1-\epsilon)]^2} = \epsilon_{3z}$.

\item[(ii)] $Q_{22} \le \epsilon$ implies $\abs{\vec{r}_{2xy} - \vec{n}} \le 2 \epsilon/(c_2 - \epsilon) = \epsilon_{2xy}$ and $\abs{\vec{r}_{2z}} \le \sqrt{1-[1-2\epsilon/(c_2-\epsilon)]^2} = \epsilon_{2z}$.
\end{itemize}
\end{corollary}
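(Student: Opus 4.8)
The plan is to derive both parts directly from Lemma~\ref{ap:lem2}, so the only real work is checking that its hypotheses are met in the present situation. Throughout I keep the standing assumption $Q\in P\pm\epsilon$, with $P$ a USD correlation of the form~\eqref{Eq:USDDistr}; since $P_{31}=P_{22}=0$, this already gives $Q_{31}\le\epsilon$ and $Q_{22}\le\epsilon$, which are the hypotheses appearing in (i) and (ii). I also invoke the gauge fixing established above, under which the Bloch vectors $\vec x_1$ of $M_{1|1}=(x_{01},\vec x_1)$ and $\vec x_2$ of $M_{2|1}=(x_{02},\vec x_2)$ both have vanishing $z$-component, which is exactly the structural assumption on the effect required by Lemma~\ref{ap:lem2}.

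For part~(i) I would apply Lemma~\ref{ap:lem2} with state $\rho=\rho_3$ and effect $F=M_{1|1}$. The one remaining ingredient is a positive lower bound $c$ on $x_{01}$, and this is supplied by Corollary~\ref{ap:lcor1}(ii): $x_{01}\ge c_1-\epsilon$, so we take $c=c_1-\epsilon$. Lemma~\ref{ap:lem2} then yields $\abs{\vec r_{3z}}\le\sqrt{1-[1-2\epsilon/(c_1-\epsilon)]^2}=\epsilon_{3z}$ and, writing $\vec t$ for the unit vector along $\vec r_{3xy}$ (the vector denoted $\vec n$ in Lemma~\ref{ap:lem2}), $\abs{\vec r_{3xy}-\vec t}\le 2\epsilon/(c_1-\epsilon)=\epsilon_{3xy}$, which is precisely the claim.

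Part~(ii) is handled in the same way, now with $\rho=\rho_2$, $F=M_{2|1}$, and the lower bound $x_{02}\ge c_2-\epsilon$ coming from Corollary~\ref{ap:lcor1}(iii); choosing $c=c_2-\epsilon$ in Lemma~\ref{ap:lem2} gives $\abs{\vec r_{2xy}-\vec n}\le 2\epsilon/(c_2-\epsilon)=\epsilon_{2xy}$ and $\abs{\vec r_{2z}}\le\sqrt{1-[1-2\epsilon/(c_2-\epsilon)]^2}=\epsilon_{2z}$. There is no substantive obstacle here: the whole argument is just the substitution of the correct constant into an already proved lemma. The only points that need a little care are to confirm that $c_1-\epsilon$ and $c_2-\epsilon$ are indeed the relevant lower bounds on the effect traces --- which is exactly the content of Corollary~\ref{ap:lcor1} --- and to note that $\epsilon$ must be taken small enough that these constants stay positive, so that the bounds are meaningful and the square roots real.
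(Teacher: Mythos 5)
Your proposal is correct and follows exactly the route the paper intends: the corollary is stated there as a ``direct consequence'' of Lemma~\ref{ap:lem2}, and you supply precisely the implicit substitutions --- $(\rho_3, M_{1|1})$ with $c=c_1-\epsilon$ from Corollary~\ref{ap:lcor1}(ii), and $(\rho_2, M_{2|1})$ with $c=c_2-\epsilon$ from Corollary~\ref{ap:lcor1}(iii) --- together with the gauge choice that $\vec x_1,\vec x_2$ have no $z$-component. Your added remark that $\epsilon$ must be small enough to keep $c_1-\epsilon$ and $c_2-\epsilon$ positive is a sensible clarification the paper leaves tacit.
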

Now let us consider the cost of pinning the values of $\vec{r}_1$, $\vec{r}_2$ and $\vec{r}_3$. It is best to start with $\vec{r}_2$. We consider the estimation of $\vec{r}_2$ by $\vec{n}$. This leads to new estimation of the matrix elements involving $\rho_1$ and $\rho_2$. In particular we have
\begin{widetext}
\begin{align}
Q_{11} &= \frac{1}{2} \left( x_{01} + \vec{r}_1 \vec{x}_1 \right) \nonumber \\
&=\frac{1}{2} \left[ x_{01} + (-\vec{n}) \vec{x}_1 + (\vec{n}-\vec{r}_{2xy}) \vec{x}_1 + (\vec{r}_2 - \vec{y}) \vec{x}_1 + (\vec{y} +\vec{r}_1) \vec{x}_1\right] \nonumber \\
&\in \frac{1}{2} \left[ x_{01} + (-\vec{n}) \vec{x}_1 \pm (\epsilon_{2xy} + 2 \sqrt{4 \epsilon + \epsilon^2})\right]
\label{eq:errorfist}
\end{align}
\begin{align}
Q_{12} &= \frac{1}{2} \left( x_{02} + \vec{r}_1 \vec{x}_2 \right) \nonumber \\
& = \frac{1}{2} \left[ x_{02} + (-\vec{n}) \vec{x}_2 + (\vec{n}-\vec{r}_{2xy}) \vec{x}_2 + (\vec{r}_2 - \vec{y}) \vec{x}_2 + (\vec{y} +\vec{r}_1) \vec{x}_2\right] \nonumber \\
& \in \frac{1}{2} \left[ x_{02} + (-\vec{n}) \vec{x}_2 \pm (\epsilon_{2xy} + 2 \sqrt{4 \epsilon + \epsilon^2})\right]
\end{align}
\begin{align}
Q_{21} &= \frac{1}{2} \left( x_{01} + \vec{r}_2 \vec{x}_1 \right)
= \frac{1}{2} \left[ x_{01} + \vec{n} \vec{x}_1 + (\vec{r}_{2xy} - \vec{n})\vec{x}_1 \right]
\in \frac{1}{2} \left[ x_{01} + \vec{n} \vec{x}_1 \pm \epsilon_{2xy} \right]
\end{align}
\begin{align}
Q_{22} &= \frac{1}{2} \left( x_{02} + \vec{r}_2 \vec{x}_2 \right)
= \frac{1}{2} \left[ x_{02} + \vec{n} \vec{x}_2 + (\vec{r}_{2xy} - \vec{n})\vec{x}_2 \right]
\in \frac{1}{2} \left[ x_{02} + \vec{n} \vec{x}_2 \pm \epsilon_{2xy} \right]
\end{align}
Let us consider the error one introduces by fixing $\vec{r}_{3xy}$ to $\vec{t}$. This only affects the last row of the correlation table
\begin{align}
Q_{33} &= \frac{1}{2} \left( y_0 + \vec{y} \vec{r}_3\right) = \frac{1}{2} \left[ y_0 + \vec{r_2} \vec{r}_3 + (\vec{y} - \vec{r}_2) \vec{r}_3 \right] \nonumber
\in \frac{1}{2} [y_0 + \vec{r}_2 \vec{r}_3 \pm \sqrt{4 \epsilon + \epsilon^2}] \nonumber \\
& \in \frac{1}{2} [y_0 + \vec{r}_{2xy} \vec{r}_{3xy} + \vec{r}_{2z} \vec{r}_{3z} \pm (\sqrt{4 \epsilon + \epsilon^2})] \nonumber \\
& \in \frac{1}{2} [y_0 + \vec{n} \vec{t} + \vec{n} (\vec{r}_{3xy}-\vec{t}) + (\vec{r}_{2xy} - \vec{n}) \vec{t} + (\vec{r}_{2xy} - \vec{n})(\vec{r}_{3xy} - \vec{t}) \pm (\epsilon_{2z} \epsilon_{3z} + \sqrt{4 \epsilon + \epsilon^2})] \nonumber \\
& \in \frac{1}{2} [y_0 + \vec{n} \vec{t} \pm (\epsilon_{2xy} + \epsilon_{3xy} + \epsilon_{2xy} \epsilon_{3xy} + \epsilon_{2z} \epsilon_{3z} + \sqrt{4 \epsilon + \epsilon^2})].
\end{align}
\begin{align}
Q_{31} & = \frac{1}{2} \left( x_{01} + \vec{x}_1 \vec{r}_3 \right)
= \frac{1}{2} \left( x_{01} + \vec{x}_1 \vec{r}_{3xy} \right) \nonumber
 = \frac{1}{2} \left[ x_{01} + \vec{x}_1 \vec{t} + \vec{x}_1 (\vec{r}_{3xy} - \vec{t})\right] \nonumber \\
& \in \frac{1}{2} \left[ x_{01} + \vec{x}_1 \vec{t} \pm \epsilon_{3xy} \right]
\end{align}
\begin{align}
Q_{32} & = \frac{1}{2} \left( x_{02} + \vec{x}_2 \vec{r}_3 \right)
 = \frac{1}{2} \left( x_{02} + \vec{x}_2 \vec{r}_{3xy} \right)
 = \frac{1}{2} \left[ x_{02} + \vec{x}_2 \vec{t} + \vec{x}_2 (\vec{r}_{3xy} - \vec{t})\right] \nonumber \\
& \in \frac{1}{2} \left[ x_{02} + \vec{x}_2 \vec{t} \pm \epsilon_{3xy} \right] \label{eq:errorlast}
\end{align}
\end{widetext}
To sum up, as long as $Q \in P \pm \epsilon$, the Eqns.~\eqref{eq:errorfist}--\eqref{eq:errorlast} should be satisfied. This further implies the constraints on the values on right hand sides of Eqns.~\eqref{eq:errorfist}--\eqref{eq:errorlast}. For example, Eq.~\eqref{eq:errorfist} together with $Q_{11} \in P_{11} \pm \epsilon$ implies $1/2[x_{01} + (\vec{n} \vec{x}_1)] \in P_{11} \pm [\epsilon + (\epsilon_{2xy} + 2 \sqrt{4 \epsilon + \epsilon^2})]$, and so on. 

Therefore, given $\epsilon >0$ arbitrary, we have to ask whether there exist feasible $\vec{n}$ and $\vec{t}$ and $(x_{10},\vec{x}_1)$, $(x_{20},\vec{x}_2)$ which are simulable by dichotomic measurements such that all these constraints are satisfied.

Notice, that due to the unitary freedom mentioned at the beginning of this section, one can always fix $\vec{n}=(1,0,0)$. If we further fix $\vec{t}= (\cos (\varphi), \sin (\varphi), \ 0)$, asking for the existence of $(x_{10},\vec{x}_1)$, $(x_{20},\vec{x}_2)$ which are simulable by dichotomic measurements is an SDP.

To conclude, we have the following algorithm. Scanning over $\vec{t}= (\cos (\varphi), \sin(\varphi), 0)$ with certain finite step in $\varphi$, for any value of $\vec{t}$, fix $\vec{r}_2=\vec{n}$, $\vec{r}_1=-\vec{n}$, $\vec{y} = \vec{n}$ and $\vec{r}_3=\vec{t}$. Fix a value $\epsilon$ and test whether the SDP of finding reducible $M_{1|1}$, $M_{2|1}$ such that all the above-mentioned constrains are satisfied. At $\epsilon=0$, the SDP is infeasible. One can implement a bisection method to find out the exact transition point where the SDP is infeasible. We obtain the critical error tolerance $\epsilon_c(\varphi)$. By scanning all values of $\varphi$, one finds $\epsilon^*= \min \epsilon_c(\varphi)$.

For the practical purpose, the above described procedure is sufficient. In principle, one may still object that the number $\epsilon^*$ may not be reliable because of the finite step scanning over the values of $\varphi$. This objection can also be addressed. The idea is that the error introduced by the finite steps can in fact be bounded by bounding the variation of the function $\epsilon_c(\varphi)$. That is, we can find a number $C$ such that $\abs{\epsilon_c(\varphi + x)- \epsilon_c(\varphi)} \le C \delta$ for all $\varphi \in [0, 2\pi]$ and $\abs{x} \le \delta$. Note that the variation of $\varphi$ only affects $Q_{3k}$. A variation of $\delta$ in $\varphi$ gives rise to a variation of $\delta \vec{t}$ with $\abs{\delta \vec{t}} \le \delta$. One then sees that $\delta Q_{3k} \le \delta/2$. Therefore $\abs{\epsilon_c(\varphi + x)- \epsilon_c(\varphi)} \le \delta/2$ for any value of $\varphi$ and $\abs{x} \le \delta$. If one selects a step in $\varphi$ with size $\delta$, the error in the global minimum $\epsilon^* = \min \epsilon_c(\varphi)$ is bounded by $\delta/4$ (since the maximum distance from any point to a computed point is $\delta/2$). Taking $\delta=2 \pi \times 10^{-5}$ is sufficient to bound the error by $\pi/2 \times 10^{-5}$. An adaptive scheme of varying the step sizes over different regimes of $\varphi$ can be utilized to speed up the computation.

\section{Correlations from the trine POVM} \label{app:optimal}
At the end of Section~\ref{ssec:genuine} and in Figure~\ref{fig:structure1} we use an arrangement of states and effects involving the trine POVM. This arrangement produces the genuine trichotomic correlations
\begin{equation}
   \mathcal P_{\text{trine}} =
    \begin{pmatrix}
\frac{1}{2}&  \frac{1}{2} &  0\\
\frac{1}{2}&   0 & \frac{3}{4}\\
0          &   \frac{1}{2} &\frac{3}{4}
    \end{pmatrix}
\end{equation}
 with an Euclidean distance of $r_2\approx 0.0686$ to the simulable trichotomic correlations. The correlations $\mathcal P_\text{trine}$ are obtained with the states
\begin{equation}\begin{split}
 \rho_{1} &= \frac12(\openone+\sigma_z),\\
 \rho_{2} &= \frac12(\openone- \tfrac{\sqrt{3}}{2} \sigma_x-\tfrac{1}{2}\sigma_z),\\
 \rho_{3} &= \frac12(\openone+\tfrac{\sqrt{3}}{2}\sigma_x-\tfrac{1}{2}\sigma_z),
\end{split}\end{equation}
and the measurement effects
$M_{1|1} = \frac{2}{3}(\openone-\rho_3)$,
$M_{2|1} = \frac{2}{3}(\openone-\rho_2)$,
$M_{3|1} = \frac{2}{3}(\openone-\rho_1)$,
$M_{1|2} = \openone-\rho_1$, $M_{2|2} =\rho_1$.

\bibliography{bib}

\end{document}